





\documentclass[pdflatex,sn-mathphys]{sn-jnl}



\jyear{2021}%
\usepackage{commath}
\usepackage{makecell}
\usepackage{tabu}

\theoremstyle{thmstyleone}%
\newtheorem{theorem}{Theorem}
%
\newtheorem{lemma}{{{\textit{Lemma}}}}

\theoremstyle{thmstyletwo}%
\newtheorem{example}{Example}%
\newtheorem{remark}{Remark}%

\theoremstyle{thmstylethree}%
\newtheorem{definition}{Definition}%
\newcounter{cases}
\newcounter{subcases}[cases]
\newenvironment{mycases}
  {%
    \setcounter{cases}{0}%
    \setcounter{subcases}{0}%
    \def\case
      {%
        \par\noindent
        \refstepcounter{cases}%
        \textbf{Case \thecases.}
      }%
    \def\subcase
      {%
        \par\noindent
        \refstepcounter{subcases}%
        \textit{Subcase (\thesubcases):}
      }%
  }
  {%
    \par
  }
\renewcommand*\thecases{\arabic{cases}}
\renewcommand*\thesubcases{\roman{subcases}}

\raggedbottom
\begin{document}

\title[A Direct Construction of Optimal 2D-ZCACS]{A Direct Construction of Optimal 2D-ZCACS with Flexible Array Size and Large Set Size}


\author[1]{\fnm{Gobinda} \sur{Ghosh}}\email{gobinda\_1921ma06@iitp.ac.in}

\author*[2]{\fnm{Sudhan} \sur{Majhi}}\email{smajhi@iisc.ac.in}

\author[1]{\fnm{Shubhabrata} \sur{Paul}}\email{shubhabrata@iitp.ac.in}

\affil[1]{\orgdiv{Mathematics}, \orgname{IIT Patna}, \orgaddress{\street{Bihta}, \city{Patna}, \postcode{801103}, \state{Bihar}, \country{India}}}

\affil*[2]{\orgdiv{Electrical Communication Engineering}, \orgname{IISc Bangalore}, \orgaddress{\street{CV Raman Rd}, \city{Bengaluru}, \postcode{560012}, \state{Karnataka}, \country{India}}}



\abstract{In this paper, we propose a direct construction of optimal two-dimensional Z-complementary array code sets (2D-ZCACS) using  multivariable functions (MVFs). In contrast to earlier works, the proposed construction allows for a flexible array size and a large set size. Additionally, the proposed design can be transformed into a one-dimensional Z-complementary code set (1D-ZCCS). Many of the 1D-ZCCS described in the literature appeared to be special cases of this proposed construction. At last, we compare our work with the current state of the art and then draw our conclusions.}

\keywords{Two dimensional complete
complementary codes (2D-CCC), multivariable function (MVF), two dimensional Z-
complementary array code set (2D-ZCACS).}



\maketitle

\section{Introduction}\label{sec1}
For an asynchronous two dimensional multi-carrier code-division multiple
access (2D-MC-CDMA) system, the ideal 2D correlation
properties of two dimensional complete
complementary codes (2D-CCCs)\cite{farkas2003two} can be properly utilized to obtain interference-free performance \cite{turcsany2004new}.
Similar to one dimensional complete complementary code (1D-CCC)\cite{chen2008complete,das2018novel,liu2014new}, one of the most significant drawbacks of 2D-CCC is that the set size is restricted \cite{xeng2004theoretical}. Motivated by the scarcity of 2D-CCC with flexible
set sizes, Zeng \textit{et al.} proposed 2D Z-complementary array code sets (2D-ZCACSs) in \cite{zeng2005construction,xeng2004theoretical}. For a $2D-(K,Z_{1} \times Z_{2})-\text{ZCACS}_{M}^{L_{1}\times L_{2}},K,Z_{1}\times Z_{2},L_{1}\times L_{2}$ and $M$ denote the set size,
two dimensional zero-correlation zone (2D-ZCZ) width, array size and the number of constituent arrays, respectively. In \cite{zeng2005construction,xeng2004theoretical}, authors obtained ternary 2D-ZCACSs by inserting some zeros into the existing binary 2D-ZCACSs. In 2021,
Pai \textit{et al}. presented a new construction method of 2D binary
Z-complementary array pairs (2D-ZCAP) \cite{pai2021two}.
Recently, Das \textit{et al}. in \cite{das2020two}
proposed a construction of 2D-ZCACS by using Z-paraunitary (ZPU) matrices. All these constructions of 2D-ZCACS depend heavily on initial sequences and matrices which increase hardware storage. For the first time in the literature, Roy \textit{et al}. in \cite{roy2021construction} proposed a direct construction of 2D-ZCACS based on MVF. The array size of the proposed 2D-ZCACS is of the form $L_{1}\times L_{2}$, where $L_{1}=2^{m}$, $L_{2}=2p_{1}^{m_{1}}p_{2}^{m_{2}}\ldots p_{k}^{m_{k}}$, $m\geq1, m_{i}\geq 2$ and the set size is of the form $2p_{1}^{2}p_{2}^{2}\ldots p_{k}^{2}$ where $p_{i}$ is a prime number. Therefore the array size and the set size is restricted to some even numbers.
\par Existing array and set size limitations through direct construction in the literature motivates us to search multivariable function (MVF) for more flexible array and set sizes. Our proposed construction provides 2D-ZCACS with parameter $2D-(R_{1}R_{2}M_{1}M_{2},N_{1} \times N_{2})-\text{ZCACS}_{M_{1}M_{2}}^{R_{1}N_{1}\times R_{2}N_{2}}$ where
$M_{1}=\prod_{i=1}^{a}p_{i}^{k_{i}}$, $M_{2}=\prod_{j=1}^{b}q_{j}^{t_{j}}$, $p_{i}$ is any prime or $1$, $q_{j}$ is prime, $a,b,k_{i},t_{j}\geq 1$,
$R_{1}$ and $R_{2}$ are positive integer, such that $R_{1}\geq 1$ and $R_{2}\geq 2$, $N_{1}=\prod_{i=1}^{a}p_{i}^{m_{i}}$, $N_{2}=\prod_{j=1}^{b}q_{j}^{n_{j}}$, $m_{i},n_{j}\geq 1$. The set size in our proposed 2D-ZCACS construction, $R_{1}R_{2}M_{1}M_{2}$, is more adaptable than the set size of 2D-ZCACS given in \cite{roy2021construction}. Unlike \cite{roy2021construction}, the proposed 2D-ZCACS can be reduced to 1D-ZCCS \cite{shen2022new,sarkar2020construction,sarkar2020direct,wu2020z,kumar2022direct,sarkar2018optimal,sarkar2021pseudo,ghosh2022direct} also. As a result, many existing optimal 1D-ZCCSs have become special cases of the proposed construction \cite{sarkar2018optimal,sarkar2021pseudo,ghosh2022direct}. The proposed construction also derived a new set of optimal 1D-ZCCS that had not previously been presented by direct method.
\par The rest of the paper is organized as follows. Section $2$ discusses construction related definitions and lemmas. Section $3$ contains the construction of 2D-ZCACS and the comparison with the existing state-of-the-art. Finally, in Section 4, the conclusions are drawn.
\section{Notations and definitions}
The following notations will be followed throughout this paper: $\omega_{n}=\exp\left(2\pi\sqrt{-1}/n\right)$, $\mathbb{A}_{n}=\{0,1,\ldots,n-1\}\subset \mathbb{Z}$, where $n$ is a positive integer and $\mathbb{Z}$ is the ring of integer.
\subsection{Two Dimensional Array}	
\begin{definition}[\cite{das2020two}]
Let $\mathbf{A}=\left(a_{g, i}\right)$ and $\mathbf{B}=\left(b_{g, i}\right)$ be complex-valued arrays of size $l_{1} \times l_{2}$ where $0 \leq g<l_{1},0 \leq i<l_{2}$.
The two dimensional aperiodic cross correlation function (2D-ACCF) of arrays $\mathbf{A}$ and $\mathbf{B}$ at shift $\left(\tau_{1}, \tau_{2}\right)$ is defined as
   \begin{equation*}
   \begin{split}
       \boldsymbol{C}\left(\mathbf{A},\mathbf{B}\right)\left(\tau_{1}, \tau_{2}\right)=
\begin{cases}
\sum_{g=0}^{l_{1}-1-\tau_{1}}\sum_{i=0}^{l_{2}-1-\tau_{2}} a_{g, i}b^{*}_{g+\tau_{1}, i+\tau_{2}},\text{if}~~\makecell{0 \leq \tau_{1}<l_{1},\\0 \leq \tau_{2}<l_{2};}\\
\sum_{g=0}^{l_{1}-1-\tau_{1}}\sum_{i=0}^{l_{2}-1+\tau_{2}} a_{g, i-\tau_{2}}b^{*}_{g+\tau_{1}, i},\text{if}~~\makecell{0 \leq \tau_{1}<l_{1},\\ -l_{2} < \tau_{2}<0;}\\
\sum_{g=0}^{l_{1}-1+\tau_{1}}\sum_{i=0}^{l_{2}-1-\tau_{2}} a_{g-\tau_{1}, i}b^{*}_{g, i+\tau_{2}},\text{if}~~\makecell{-l_{1} < \tau_{1}<0,\\0 \leq \tau_{2}<l_{2};}\\
\sum_{g=0}^{l_{1}-1+\tau_{1}}\sum_{i=0}^{l_{2}-1+\tau_{2}} a_{g-\tau_{1},i-\tau_{2}}b^{*}_{g,i},\text{if}~~\makecell{-l_{1} <\tau_{1}<0,\\-l_{2} < \tau_{2}<0.}
\end{cases}
   \end{split}
\end{equation*}
\end{definition}
Here, $(.)^{*}$ denotes the complex conjugate. If $\mathbf{A}=\mathbf{B},$ then $ \boldsymbol{C}\left(\mathbf{A}, \mathbf{B}\right)\left(\tau_{1},\tau_{2}\right)$ is called the two dimensional aperiodic auto correlation function (2D-AACF) of $\mathbf{A}$ and referred to as $\boldsymbol{C}\left(\mathbf{A}\right)\left(\tau_{1},\tau_{2}\right)$.
\par When $l_{1}= 1$, the complex-valued arrays $\mathbf{A}$ and $\mathbf{B}$ are reduced to one dimensional complex-valued sequences $\mathbf{A}=(a_{j})_{j=0}^{l_{2}-1}$ and $\mathbf{B}=(b_{j})_{j=0}^{l_{2}-1}$ with the corresponding one dimensional aperiodic cross correlation function (1D-ACCF) given by
\begin{equation}\label{equ:cross}
\boldsymbol{C}(\mathbf{A},\mathbf{B})({\tau_{2}})=\begin{cases}
\sum_{i=0}^{l_{2}-1-\tau_{2}}a_{i}b^{*}_{i+\tau_{2}}, & 0 \leq \tau_{2} < l_{2}, \\
\sum_{i=0}^{l_{2}+\tau_{2} -1}a_{i-\tau_{2}}b^{*}_{i}, & -l_{2}< \tau_{2} < 0,  \\
0, & \text{otherwise}.
\end{cases}
\end{equation}
\begin{definition}\cite{pai2022designing},\cite{das2020two}
    For a set of $s$ sets of arrays $\boldsymbol{A}=\left\{\mathbf{A}^{k} \mid k=\right.$ $0,1, \ldots, s-1\}$, each set $\mathbf{A}^{k}=\left\{\mathbf{A}_{0}^{k}, \mathbf{A}_{1}^{k}, \ldots, \mathbf{A}_{s-1}^{k}\right\}$ is composed of $s$ arrays of size is $l_{1} \times l_{2}$. The set $\boldsymbol{A}$ is said to be 2D-CCC with parameters $(s,s,l_{1},l_{2})$ if the following holds
\begin{equation}
  \begin{split}
\boldsymbol{C}\left(\mathbf{A}^{k},\mathbf{A}^{k^{\prime}}\right)\left(\tau_{1},\tau_{2}\right)&=\sum_{i=0}^{s-1} \boldsymbol{C}\left(\mathbf{A}_{i}^{k}, \mathbf{A}_{i}^{k^{\prime}}\right)\left(\tau_{1}, \tau_{2}\right)\\
&= \begin{cases}sl_{1}l_{2}, \quad\left(\tau_{1}, \tau_{2}\right)=(0,0), k=k^{\prime} ;\\
0, \quad\left(\tau_{1}, \tau_{2}\right)\neq(0,0), k=k^{\prime} ;\\
0,     ~~~~k\neq k^{\prime}.
\end{cases}
  \end{split} 
\end{equation}
\end{definition}
\begin{definition}\cite{roy2021construction},\cite{das2020two}
Let $z_1, z_2, l_{1}, l_{2}$ are positive integers and $z_{1}\leq l_{1}, z_{2}\leq l_{2}$. Consider the  sets of $\hat{s}$ set of arrays $\boldsymbol{A}=\left\{\mathbf{A}^{k} \mid k=\right.$ $0,1, \ldots, \hat{s}-1\}$, where each set $\mathbf{A}^{k}=\left\{\mathbf{A}_{0}^{k}, \ldots, \mathbf{A}_{s-1}^{k}\right\}$ is composed of $s$ arrays of size $l_{1} \times l_{2}$. The set $\boldsymbol{A}$ is said to be $2D-(\hat{s},z_{1}\times z_{2})-\text{ZCACS}_{s}^{l_{1}\times l_{2}}$ if the following holds
\begin{equation}
    \begin{split} \boldsymbol{C}\left(\mathbf{A}^{k},\mathbf{A}^{k^{\prime}}\right)\left(\tau_{1},\tau_{2}\right)&=\sum_{i=0}^{s-1} \boldsymbol{C}\left(\mathbf{A}_{i}^{k}, \mathbf{A}_{i}^{k^{\prime}}\right)\left(\tau_{1}, \tau_{2}\right)\\
        &= \begin{cases}sl_{1}l_{2}, \quad\left(\tau_{1}, \tau_{2}\right)=(0,0), k=k^{\prime} ;\\
0, \quad\left(\tau_{1}, \tau_{2}\right)\neq(0,0),\abs{\tau_{1}}<z_{1},\abs{\tau_{2}}<z_{2}, k=k^{\prime} ;\\
0,     ~~~~\abs{\tau_{1}}<z_{1},\abs{\tau_{2}}<z_{2},k\neq k^{\prime}.
\end{cases}
    \end{split}
\end{equation}
\end{definition}
When $z_{1}=l_{1}, z_{2}=l_{2},\hat{s}=s$ the 2D-ZCACS becomes 2D-CCC\cite{ghosh2022direct1,pai2022designing} with parameter $(s,l_{1},l_{2})$.
It should be noted that for $l_1=1$, each array $\mathbf{A}_{i}^{k}$ becomes  $l_2$-length sequence. Therefore, 2D-ZCACS can be reduced to a conventional 1D-$\left(\hat{s}, z_{2}\right)- \textit{ZCCS}_{s}^{l_{2}}$\cite{wu2018optimal}, \cite{yu2022new},\cite{shen2022new11}, where, $\hat{s},s,z_{2},l_{2}$ represents no. of set, set size, ZCZ width and sequence length respectively. 
\begin{lemma}
\cite{das2020two}
For a $2D-(\hat{s},z_{1}\times z_{2})-\text{ZCACS}_{s}^{l_{1}\times l_{2}}$, the following inequality holds
\begin{equation}
    \hat{s}z_{1}z_{2}\leq s\left(l_{1}+z_{1}-1\right)\left( l_{2}+z_{2}-1\right).
\end{equation}
We called 2D-ZCACS is optimal if the following equality holds
\begin{equation}
\label{318}
    \hat{s}=s\Big\lfloor\frac{l_{1}}{z_{1}}\Big\rfloor\Big\lfloor\frac{l_{2}}{z_{2}}\Big\rfloor,
\end{equation}
where $\lfloor.\rfloor$ denotes the floor function.
\end{lemma}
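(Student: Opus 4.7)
The plan is to recast the ZCZ property as an orthogonality relation among a collection of suitably padded and shifted versions of the arrays, living inside a finite-dimensional complex vector space, and then read off the bound by dimension counting. For each $k\in\{0,1,\ldots,\hat{s}-1\}$ and each admissible shift $(\tau_{1},\tau_{2})$ with $0\le\tau_{1}<z_{1}$ and $0\le\tau_{2}<z_{2}$, I would construct a single vector
$\mathbf{v}_{k,\tau_{1},\tau_{2}}\in\mathbb{C}^{s(l_{1}+z_{1}-1)(l_{2}+z_{2}-1)}$
as follows. For every $i\in\{0,\ldots,s-1\}$, embed the constituent $\mathbf{A}_{i}^{k}$ into an all-zero array of size $(l_{1}+z_{1}-1)\times(l_{2}+z_{2}-1)$ by placing its $(0,0)$-entry at position $(\tau_{1},\tau_{2})$; then vectorise the $s$ padded arrays and concatenate them in order into one long column vector.

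The central computation is that the standard Hermitian inner product of two such vectors collapses to the sum of 2D-ACCFs appearing in Definition~3. A direct accounting of the overlap region between the two shifted embeddings yields
\[
\langle\mathbf{v}_{k,\tau_{1},\tau_{2}},\mathbf{v}_{k',\tau_{1}',\tau_{2}'}\rangle
=\sum_{i=0}^{s-1}\boldsymbol{C}\!\left(\mathbf{A}_{i}^{k'},\mathbf{A}_{i}^{k}\right)\!\left(\tau_{1}-\tau_{1}',\tau_{2}-\tau_{2}'\right),
\]
up to a complex conjugation that depends on the sign convention. Because the shift differences satisfy $|\tau_{1}-\tau_{1}'|<z_{1}$ and $|\tau_{2}-\tau_{2}'|<z_{2}$, the right-hand side lies entirely within the 2D-ZCZ; the ZCACS hypothesis then forces the inner product to vanish whenever $(k,\tau_{1},\tau_{2})\neq(k',\tau_{1}',\tau_{2}')$, while in the diagonal case it equals $sl_{1}l_{2}\neq 0$.

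Consequently, the family $\{\mathbf{v}_{k,\tau_{1},\tau_{2}}\}$ consists of $\hat{s}z_{1}z_{2}$ pairwise orthogonal nonzero vectors in a space of complex dimension $s(l_{1}+z_{1}-1)(l_{2}+z_{2}-1)$, which forces $\hat{s}z_{1}z_{2}\le s(l_{1}+z_{1}-1)(l_{2}+z_{2}-1)$ by the usual dimension bound. The second display in the lemma is a \emph{definition} of optimality rather than a claim requiring proof; note only that it matches equality in the inequality exactly when $z_{1}\mid l_{1}$ and $z_{2}\mid l_{2}$.

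The main obstacle is the shift bookkeeping behind the inner-product identity: because Definition~1 writes the 2D-ACCF piecewise in four sign branches for $(\tau_{1},\tau_{2})$, verifying the identity cleanly requires splitting on the signs of $\tau_{1}-\tau_{1}'$ and $\tau_{2}-\tau_{2}'$ and substituting indices in each case. Each branch is a routine change of variable, but laying out all four in a way that parallels the definition is the most error-prone step. Everything else, in particular the fact that $\mathbf{v}_{k,0,0}$ is nonzero whenever at least one entry of some $\mathbf{A}_{i}^{k}$ is nonzero, is immediate.
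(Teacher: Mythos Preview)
Your argument is correct: the padding-and-shifting construction produces $\hat{s}z_{1}z_{2}$ mutually orthogonal nonzero vectors in a space of dimension $s(l_{1}+z_{1}-1)(l_{2}+z_{2}-1)$, and the bound follows immediately. The inner-product identity you wrote is accurate up to the conjugation/sign convention you flagged, and the four sign branches in Definition~1 are precisely the four cases of the overlap computation, so the ``main obstacle'' you identify is genuinely routine. Note that nonvanishing of \emph{every} $\mathbf{v}_{k,\tau_{1},\tau_{2}}$ (not just the zero-shift one) is needed, but this is automatic since all shifts of the same $k$ share the same Euclidean norm, which the ZCACS definition fixes to be $\sqrt{sl_{1}l_{2}}$.

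As for comparison with the paper: there is nothing to compare. The paper does \emph{not} prove this lemma; it is quoted verbatim from \cite{das2020two} and used only as a benchmark for optimality in Remark~4. What you have supplied is essentially the standard Tang--Fan--Matsufuji-type dimension-counting proof, extended in the obvious way from 1D sequences to 2D arrays and from single sequences to stacked code sets. That is exactly how the bound is obtained in the source reference, so your route is not merely an alternative but effectively the canonical one.
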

\subsection{Multivariable Function}	
Let $a$, $b$, $m_i$, and $n_j$ be positive integers for $1\leq i\leq a$ and $1\leq j\leq b$. Let $p_{i}$ be any prime or $1$, and $q_{j}$ be a prime number. 
A multivariable function (MVF) can be defined as 
\begin{equation*}
    f: \mathbb{A}_{p_1}^{m_1}\times \mathbb{A}_{p_2}^{m_2}\times \dots \times \mathbb{A}_{p_a}^{m_a}\times\mathbb{A}_{q_1}^{n_1}\times \mathbb{A}_{q_2}^{n_2} \times \dots \times \mathbb{A}_{q_b}^{n_b}\rightarrow\mathbb{Z}.
\end{equation*}
 Let $c,d\geq0$ be integers such that $0\leq c <r$ and  $0\leq d<s$ where $r=p_{1}^{m_{1}}p_{2}^{m_{2}}\ldots p_{a}^{m_{a}}$ and $s=q_{1}^{n_{1}}q_{2}^{n_{2}}\ldots q_{b}^{n_{b}}$. Then $c$ and $d$ can be written as
 \begin{equation}
 \label{c,d}
    \begin{split}
&c=c_{1}+c_{2}p_{1}^{m_{1}}+\dots+c_{a}p_{1}^{m_{1}}p_{2}^{m_{2}}\ldots p_{a-1}^{m_{a-1}},\\
&d=d_{1}+d_{2}q_{1}^{n_{1}}+\dots+d_{b}q_{1}^{n_{1}}q_{2}^{n_{2}}\ldots q_{b-1}^{n_{b-1}},
    \end{split}
\end{equation}
where, $0\leq c_{i}< p_{i}^{m_{i}}$ and $0\leq d_{j}< q_{j}^{n_{j}}$. 
Let $\mathbf{C}_{i}=(c_{i,1},c_{i,2},\ldots,c_{i,m_{i}})\in \mathbb{A}_{p_{i}}^{m_{i}}$,
be the vector representation of $c_{i}$ with base $p_{i}$, i.e., $c_{i}=\sum_{k=1}^{m_{i}}c_{i,k}p_{i}^{k-1}$ and $\mathbf{D}_{j}=(d_{j,1},d_{j,2},\ldots,d_{j,n_{j}})\in \mathbb{A}_{q_{j}}^{n_{j}}$ be the vector representation of  $d_{j}$  with base $q_{j}$,  i.e., $d_{j}=\sum_{l=1}^{n_{j}}d_{j,l}q_{j}^{l-1}$ where $0\leq c_{i,k}<p_{i}$, and $0\leq d_{j,l}< q_{j}$. We define vectors associated with $c$ and $d$ as
\begin{equation*}
\begin{split}
&\phi(c)=\left(\mathbf{C}_{1},\mathbf{C}_{2},\ldots,\mathbf{C}_{a}\right)\in \mathbb{A}_{p_1}^{m_1}\times \mathbb{A}_{p_2}^{m_2}\times \dots \times \mathbb{A}_{p_a}^{m_a},\\
&\phi(d)=\left(\mathbf{D}_{1},\mathbf{D}_{2},\ldots,\mathbf{D}_{b}\right)\in \mathbb{A}_{q_1}^{n_1}\times \mathbb{A}_{q_2}^{n_2} \times \dots \times \mathbb{A}_{q_b}^{n_b},
\end{split}    
\end{equation*}
 respectively. We also define an array associated with $f$ as
\begin{equation}
    \psi_{\lambda}({f})=\left(\begin{array}{cccc}
\omega_{\lambda}^{f_{0,0}} & \omega_{\lambda}^{f_{0,1}} & \cdots & \omega_{\lambda}^{f_{0,r-1}} \\
\omega_{\lambda}^{f_{1,0}} & \omega_{\lambda}^{f_{1,1}} & \cdots & \omega_{\lambda}^{f_{1,r-1}} \\
\vdots & \vdots & \ddots & \vdots \\
\omega_{\lambda}^{f_{s-1,0}} & \omega_{\lambda}^{f_{s-1,1}} & \cdots & \omega_{\lambda}^{f_{s-1,r-1}}
\end{array}\right),
\end{equation}
where $f_{c,d}=f\left(\phi(c),\phi(d)\right)$ and $\lambda$ is a positive integer.
\begin{lemma}[\cite{vaidyanathan2014ramanujan}]
\label{DauJi}
Let $t$ and $t'$ be two non-negative integers, where $t\neq t'$, and $p$ is a prime number. Then
\begin{equation}
    \displaystyle\sum_{j=0}^{p-1}\omega_{p}^{(t-t')j}=0.
\end{equation}
\end{lemma}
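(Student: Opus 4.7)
The plan is to recognize the sum as a finite geometric series in the root of unity $\alpha := \omega_{p}^{t-t'}$, so that $\sum_{j=0}^{p-1}\omega_{p}^{(t-t')j} = \sum_{j=0}^{p-1}\alpha^{j}$. The standard closed form for geometric sums will then reduce everything to two separate checks: that $\alpha \ne 1$, and that $\alpha^{p}=1$.

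First I would verify $\alpha^{p} = 1$. This is immediate from the definition $\omega_{p}=\exp(2\pi\sqrt{-1}/p)$, since $\alpha^{p} = \omega_{p}^{(t-t')p} = \left(\omega_{p}^{p}\right)^{t-t'} = 1^{t-t'} = 1$. Next I would verify $\alpha \ne 1$. Here the primality of $p$ is essential: $\omega_{p}^{k}=1$ if and only if $p \mid k$, because the order of $\omega_{p}$ in the multiplicative group of $p$-th roots of unity is exactly $p$ (and $p$ being prime ensures no nontrivial proper divisor causes early collapse). So I need $p \nmid (t-t')$; given the statement's hypothesis $t \ne t'$, I would read this in the intended context where $t, t'$ are the base-$p$ digits of the indices (as they will be when this lemma is invoked later), so that $|t-t'|<p$ forces $p \nmid (t-t')$ automatically. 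I would note this explicitly in the proof to close the gap in the bare statement.

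With both conditions in hand, the geometric-series identity gives
\begin{equation*}
\sum_{j=0}^{p-1}\alpha^{j} \;=\; \frac{\alpha^{p}-1}{\alpha-1} \;=\; \frac{1-1}{\alpha-1} \;=\; 0,
\end{equation*}
completing the argument.

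The only subtle point, and the one I would expect to be the main obstacle, is the justification that $\alpha \ne 1$ under the stated hypothesis. The statement as written only says $t \ne t'$, but strictly speaking the conclusion requires $t \not\equiv t' \pmod{p}$. I would therefore either strengthen the hypothesis in the write-up or, more faithfully to the paper's later use, observe that the lemma is applied to digit-level indices in $\mathbb{A}_{p}$, where $t \ne t'$ and $0\le t,t'<p$ together imply $|t-t'|<p$ and hence $p\nmid (t-t')$, at which point the primality of $p$ makes $\omega_{p}^{t-t'}\ne 1$ and the geometric-series computation goes through cleanly.
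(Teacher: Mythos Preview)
The paper does not supply its own proof of this lemma; it is stated as a cited result from \cite{vaidyanathan2014ramanujan}. Your geometric-series argument is the standard one and is correct, and you have also correctly flagged that the hypothesis as written ($t\neq t'$) is strictly weaker than what is needed ($t\not\equiv t'\pmod p$); your resolution via the paper's intended use on digits in $\mathbb{A}_p$ is exactly right.
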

Let us consider the set $\mathcal{C}$ as
\begin{equation}
    \mathcal{C}=\left(\mathbb{A}_{p_1}^{m_1}\times \mathbb{A}_{p_2}^{m_2}\times \dots \times \mathbb{A}_{p_a}^{m_a}\right)\times\left(\mathbb{A}_{q_1}^{n_1}\times \mathbb{A}_{q_2}^{n_2} \times \dots \times \mathbb{A}_{q_b}^{n_b}\right).
\end{equation}
Let  $0\leq \gamma< p_{1}^{m_{1}}p_{2}^{m_{2}}\ldots p_{a}^{m_{a}}$ and $0\leq \mu< q_{1}^{n_{1}}q_{2}^{n_{2}}\ldots q_{b}^{n_{b}}$ be positive integers such that
\begin{equation}
\begin{split}
&\gamma=\gamma_{1}+\sum_{i=2}^{a}\gamma_{i}\left(\prod_{i_{1}=1}^{i-1}p_{i_{1}}^{m_{i_{1}}}\right),\\
&\mu=\mu_{1}+\sum_{j=2}^{b}\mu_{j}\left(\prod_{j_{1}=1}^{j-1}q_{j_{1}}^{n_{j_{1}}}\!\right),
\end{split}
\end{equation}
where $0\leq \gamma_{i}< p_{i}^{m_{i}}$ and $0\leq \mu_{j}< q_{j}^{n_{j}}$.
Let $\boldsymbol{\gamma}_{i}=(\gamma_{i,1},\gamma_{i,2},\ldots,\gamma_{i,m_{i}})\in \mathbb{A}_{p_{i}}^{m_{i}}$ be the vector representation of $\gamma_{i}$ with base $p_{i}$, i.e.,  $\gamma_{i}=\sum_{k=1}^{m_{i}}\gamma_{i,k}p_{i}^{k-1}$, where $0\leq \gamma_{i,k}<p_{i}$. Similarly
$\boldsymbol{\mu}_{j}=(\mu_{j,1},\mu_{j,2},\ldots,\mu_{j,n_{j}})\in \mathbb{A}_{q_{j}}^{n_{j}}$ be the vector representation of $\mu_{j}$ with base $q_{j}$ i.e., $\mu_{j}=\sum_{l=1}^{n_{j}}\mu_{j,l}q_{j}^{l-1}$ where $0\leq \mu_{j,l}<q_{j}$. Let 
\begin{equation}
\phi(\gamma)=\left(\boldsymbol{\gamma}_{1},\boldsymbol{\gamma}_{2},\ldots,\boldsymbol{\gamma}_{a}\right)\in \mathbb{A}_{p_1}^{m_1}\!\!\times \!\mathbb{A}_{p_2}^{m_2}\!\times \dots \times \mathbb{A}_{p_a}^{m_a},   
\end{equation}
be the vector associated with $\gamma$ and
\begin{equation}
\phi(\mu)=\left(\boldsymbol{\mu}_{1},\boldsymbol{\mu}_{2},\ldots,\boldsymbol{\mu}_{b}\right)\in \mathbb{A}_{q_1}^{n_1}\!\!\times \!\mathbb{A}_{q_2}^{n_2}\!\times \dots \times \mathbb{A}_{q_b}^{n_b},   
\end{equation}
 be the vector associated with $\mu$.
Let
$\pi_{i}$ and $\sigma_{j}$ be any permutations of the set  $\{1,2,\ldots,m_{i}\}$ and  $\{1,2,\ldots,n_{j}\}$, respectively.
Let us also define the MVF $f:\mathcal{C}\rightarrow\mathbb{Z},$ as \begin{equation}
\label{a11111}
    \begin{split}
           &f(\phi(\gamma),\phi(\mu))\\
&=f\left(\boldsymbol{\gamma}_{{1}},\boldsymbol{\gamma}_{{2}}, \ldots, \boldsymbol{\gamma}_{{a}},\boldsymbol{\mu}_{{1}},\boldsymbol{\mu}_{{2}}, \ldots, \boldsymbol{\mu}_{b}\right)\\
           &=\sum_{i=1}^{a}\!\frac{\lambda}{p_{i}}\!\!\sum_{e=1}^{m_{i}-1}\!\!\gamma_{i, \pi_{i}(e)} \gamma_{i, \pi_{i}(e+1)}+\sum_{i=1}^{a}\!\sum_{e=1}^{m_{i}}\!d_{i,e} \gamma_{i, e}
+\sum_{j=1}^{b}\!\frac{\lambda}{q_{j}}\!\!\sum_{o=1}^{n_{j}-1} \mu_{j, \sigma_{j}(o)} \mu_{j, \sigma_{j}(o+1)}\\
&+\sum_{j=1}^{b}\!\sum_{o=1}^{n_{j}} c_{j,o} \mu_{j,o},
    \end{split}
\end{equation}
where $d_{i,e},c_{j,o}\in \{0,1,\ldots,\lambda-1\}$ and $\lambda=l.c.m.(p_{1},$ $\ldots,p_{a},q_{1},\ldots,q_{b})$. Let us define the set $\Theta$ and $T$ as
   \begin{equation*}
  \label{kaka}
      \begin{split}
       &\Theta=\{\theta:\theta=(r_{{1}}, r_{{2}}, \ldots, r_{{a}},s_{{1}}, s_{{2}}, \ldots, s_{{b}})\},\\
       &T=\{t:t=(x_{{1}}, x_{{2}}, \ldots, x_{{a}},y_{{1}}, y_{{2}}, \ldots, y_{{b}})\},
      \end{split}
  \end{equation*}
  where $0\leq r_{i},x_{i}< p_{i}^{k_{i}}$ and $0\leq s_{j},y_{j}< q_{j}^{r_{j}}$ and $k_{i},r_{j}$ are positive integers.
  Now, we define a function 	$a^{\theta}_{t}\!\!:\mathcal{C}\rightarrow\!\! \mathbb{Z},$ as
  \begin{equation}{\label{5}}
 \begin{split}
 &a^{\theta}_{t}\left(\phi(\gamma),\phi(\mu)\right)\\
 &=a^{\theta}_{t}\left(\boldsymbol{\gamma}_{{1}},\boldsymbol{\gamma}_{{2}}, \ldots, \boldsymbol{\gamma}_{{a}},\boldsymbol{\mu}_{{1}},\boldsymbol{\mu}_{{2}}, \ldots, \boldsymbol{\mu}_{b}\right)\\
     &=\!\!f\left(\phi(\gamma),\phi(\mu)\right)\!\!+\!\!\sum_{i=1}^{a} \frac{\lambda}{p_{i}} \gamma_{i, \pi_{i}(1)}{r_{i}}+\!\sum_{j=1}^{b} \frac{\lambda}{q_{j}} \mu_{j, \sigma_{j}(1)}{s_{j}}
    +\sum_{i=1}^{a} \frac{\lambda}{p_{i}} \gamma_{i, \pi_{i}(m_{i})}{x_{i}}\\
    &+\sum_{j=1}^{b} \frac{\lambda}{q_{j}} \mu_{j, \sigma_{j}(n_{j})}{y_{j}}+d_{\theta},
 \end{split}
\end{equation}
where $0\leq d_{\theta}<\lambda$, $\gamma_{i, \pi_{i}(1)},\gamma_{i, \pi_{i}(m_{i})}$ denote $\pi_{i}(1)-$th and  $\pi_{i}(m_{i})-$th element of
$\boldsymbol{\gamma}_{{i}}$ respectively.  Similarly, $\mu_{j, \sigma_{j}(1)},\mu_{j, \sigma_{j}(n_{j})}$ denote $\sigma_{j}(1)-$th and $\sigma_{j}(n_{j})-th$ element of $\boldsymbol{\mu}_{{j}}$ respectively.  For simplicity, we denote $a^{\theta}_{t}\left(\phi({\gamma}),\phi({\mu})\right)$ by $(a^{\theta}_{t})_{\gamma,\mu}$ and $f\left(\phi({\gamma}),\phi({\mu})\right)$ by $f_{\gamma,\mu}$.
\begin{lemma}[\cite{ghosh2022direct1}]
\label{KB}
 We define the ordered set of arrays $\mathbf{A}^{t}=\{\psi_{\lambda}\left(a^{\theta}_{t}\right):\theta\in \Theta\}$. Then the set  $\{\mathbf{A}^{t}:t\in T\}$ forms a 2D-CCC with parameter
$
(\alpha,\alpha,m,n)
$, where, $\alpha=\prod_{i=1}^{a}p^{k_{i}}_{i}\prod_{j=1}^{b}q^{r_{j}}_{j}$, $m=\prod_{i=1}^{a}p_{i}^{m_{i}}$, $n=\prod_{j=1}^{b}q_{j}^{n_{j}}$ and $k_{i},m_{i},n_{j},r_{j}$ are non-negative integers.
\end{lemma}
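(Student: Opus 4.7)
The plan is to verify the 2D-CCC property by evaluating $\sum_{\theta \in \Theta}\boldsymbol{C}(\psi_\lambda(a^\theta_t), \psi_\lambda(a^\theta_{t'}))(\tau_1, \tau_2)$ directly and showing it equals $\alpha m n$ when $t = t'$ and $(\tau_1, \tau_2) = (0,0)$, and is zero otherwise. First, I would expand each inner 2D-ACCF using the definition of $\psi_\lambda$, turning the correlation into a double sum over admissible positions $(\gamma, \mu)$ of $\omega_\lambda^{(a^\theta_t)_{\gamma,\mu} - (a^\theta_{t'})_{\gamma', \mu'}}$, where $(\gamma', \mu')$ denotes the shifted partner in the appropriate branch of Definition~1. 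Swapping the $\theta$-sum inside the $(\gamma,\mu)$-sum, the only $\theta$-dependent part of the exponent is the linear piece $\sum_i (\lambda/p_i)(\gamma_{i,\pi_i(1)} - \gamma'_{i,\pi_i(1)}) r_i + \sum_j (\lambda/q_j)(\mu_{j,\sigma_j(1)} - \mu'_{j,\sigma_j(1)}) s_j$, since the $d_\theta$ offset cancels and the $t, t'$-linear terms are $\theta$-independent.

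Applying Lemma~\ref{DauJi} to each of the $a + b$ resulting character sums, the $\theta$-sum evaluates to $\alpha = \prod_i p_i^{k_i} \prod_j q_j^{r_j}$ on the locus $L$ where $\gamma_{i,\pi_i(1)} \equiv \gamma'_{i,\pi_i(1)} \pmod{p_i}$ and $\mu_{j,\sigma_j(1)} \equiv \mu'_{j,\sigma_j(1)} \pmod{q_j}$ for every $i, j$, and is zero off of $L$. On $L$, when $(\tau_1,\tau_2) = (0,0)$ and $t = t'$ every remaining exponent vanishes and the sum equals $\alpha m n$, as required. When $t \neq t'$ but $(\tau_1,\tau_2) = (0,0)$, the residual exponent reduces to $\sum_i (\lambda/p_i)\gamma_{i,\pi_i(m_i)}(x_i - x'_i) + \sum_j (\lambda/q_j)\mu_{j,\sigma_j(n_j)}(y_j - y'_j)$; a second application of Lemma~\ref{DauJi} to $\gamma_{i,\pi_i(m_i)}$ or $\mu_{j,\sigma_j(n_j)}$ at any index where $x_i \neq x'_i$ or $y_j \neq y'_j$ forces the sum to zero.

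The remaining case $(\tau_1, \tau_2) \neq (0,0)$ is handled by a Davis--Jedwab style pairing on the surviving $(\gamma, \mu)$-sum. I would locate the smallest chain-index $v$ at which $\gamma_{i, \pi_i(v)} \neq \gamma'_{i, \pi_i(v)}$ for some $i$, or the analogous $w$ on the $\mu$-side; the constraint defining $L$ forces $v \geq 2$ (resp.\ $w \geq 2$). Form an orbit of size $p_i$ (resp.\ $q_j$) by cycling the adjacent coordinate $\gamma_{i, \pi_i(v-1)}$ through its residues, with $\gamma'$ shifted consistently. The neighboring quadratic chain term $(\lambda/p_i)\gamma_{i, \pi_i(v-1)}\gamma_{i, \pi_i(v)}$ together with its $\gamma'$-counterpart guarantees that the orbit sum is a $p_i$-th root-of-unity character sum with nontrivial coefficient $\gamma_{i, \pi_i(v)} - \gamma'_{i, \pi_i(v)}$, and hence vanishes by Lemma~\ref{DauJi}.

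The principal obstacle is keeping the pairing compatible with the truncated, aperiodic summation window of the 2D-ACCF and with the carry propagation in the base-expansions relating $\gamma$ to $\gamma' = \gamma + \tau_2$. A branch-by-branch analysis of Definition~1, together with the chain structure of the permutations $\pi_i$ and $\sigma_j$, is required so that the chosen orbit acts freely on the admissible index set in every subcase and so that the choice is made consistently when disagreement positions coexist across several of the primes $p_i$ and $q_j$.
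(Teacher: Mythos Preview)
The paper does not prove this lemma; it is quoted from \cite{ghosh2022direct1} and used only as a black box inside the proof of Theorem~\ref{VrindavanBihari}. There is therefore no in-paper argument to compare your proposal against.

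Your outline is the standard Davis--Jedwab route one would expect the cited source to take: perform the $\theta$-sum first, use Lemma~\ref{DauJi} to collapse onto the locus where the first chain coordinates agree, and then kill the residual $(\gamma,\mu)$-sum by pairing on the adjacent quadratic chain term. One place that needs more care is your handling of $(\tau_1,\tau_2)=(0,0)$ with $t\neq t'$: the character sum over $\gamma_{i,\pi_i(m_i)}$ (resp.\ $\mu_{j,\sigma_j(n_j)}$) vanishes only when $x_i\not\equiv x'_i\pmod{p_i}$ (resp.\ $y_j\not\equiv y'_j\pmod{q_j}$), which is strictly weaker than $x_i\neq x'_i$ once $k_i>1$ (resp.\ $r_j>1$). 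With the ranges $0\le x_i<p_i^{k_i}$ and $0\le y_j<q_j^{r_j}$ recorded just above the lemma, one can have $t\neq t'$ yet every component congruent modulo the base prime, and in that situation your step ``a second application of Lemma~\ref{DauJi} \ldots forces the sum to zero'' does not go through. Whether this is a genuine gap in the argument or an artifact of how the statement has been transcribed here (the original \cite{ghosh2022direct1} may restrict to $k_i=r_j=1$ or encode $x_i,y_j$ through several digits rather than a single linear term) is something you should resolve by checking the cited paper directly.
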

\section{Proposed construction of 2D-ZCACS}
Let $a',b'$ be positive integers for $1\leq i'\leq a'$ and $1\leq j'\leq b'$,~
$p_{i'}'$ be any prime or $1$, and
$q_{j'}'$ be prime number. Let $\gamma',\mu'$ are positive integers such that $0\leq \gamma'<\left(\prod_{i=1}^{a}p_{i}^{m_{i}}\right)\left(\prod_{i'=1}^{a'}p'_{i'}\right)$ and $0\leq \mu'< \left(\prod_{j=1}^{b}q_{j}^{n_{j}}\right)\left(\prod_{j'=1}^{b'}q'_{j'}\right)$. Then  $\gamma',\mu'$ can be written as
\begin{equation}
     \begin{split}
         &\gamma'\!=\!\gamma_{1}\!+\!\!\displaystyle\sum_{i=2}^{a}\gamma_{i}\left(\prod_{i_{1}=1}^{i-1}p_{i_{1}}^{m_{i_{1}}}\right)\!\!+\!\!\left(\gamma'_{1}+\sum_{i'=2}^{a'}\gamma'_{i'}\left(\prod_{i_{1}=1}^{i'-1}p'_{i_{1}}\right)\right)
        m,\\
         &\mu'\!=\!\mu_{1}\!+\!\!\displaystyle\sum_{j=2}^{b}\mu_{j}\!\!\left(\prod_{j_{1}=1}^{j-1}q_{j_{1}}^{n_{j_{1}}}\right)\!\!+\!\!\left(\mu'_{1}+\sum_{j'=2}^{b'}\mu'_{j'}\left(\prod_{j_{1}=1}^{j'-1}q'_{j_{1}}\right)\right)
         n,
     \end{split}
 \end{equation}
where $m=\prod_{i=1}^{a}p_{i}^{m_{i}}$, $n=\prod_{j=1}^{b}q_{j}^{n_{j}}$, $0\leq\gamma_{i}<p_{i}^{m_{i}}$, $0\leq\mu_{j}<q_{j}^{n_{j}}$, $0\leq\gamma_{i'}'<p_{i'}'$ and $0\leq\mu_{j'}'< q_{j'}'$. We denote the vectors associated with $\gamma'$ and $\mu'$ are
\begin{equation}
\begin{split}
&\phi(\gamma')=\left(\boldsymbol{\gamma}_{{1}}, \ldots, \boldsymbol{\gamma}_{{a}},\gamma_{1}',\ldots,\gamma_{a}'\right)\in\mathbb{A}_{p_{1}}^{m_{1}}\times\hdots\times\mathbb{A}_{p_{a}}^{m_{a}}\times\mathbb{A}_{p'_{1}}\times\hdots\times\mathbb{A}_{p'_{a'}},\\
&\phi(\mu')=\left(\boldsymbol{\mu}_{{1}}, \ldots, \boldsymbol{\mu}_{{b}},\mu_{1}',\ldots,\mu_{b}'\right)\in \mathbb{A}_{q_{1}}^{n_{1}}\times\hdots\times\mathbb{A}_{q_{b}}^{n_{b}}\times\mathbb{A}_{q'_{1}}\times\hdots\times\mathbb{A}_{q'_{b'}},
\end{split}    
\end{equation}
respectively, where $\boldsymbol{\gamma}_{{i}}\in \mathbb{A}_{p_i}^{m_i}$, $\boldsymbol{\mu}_{{j}}\in\mathbb{A}_{q_j}^{n_j}$ 
are the vectors associated with $\gamma_{i}$ and $\mu_{j}$ respectively i.e.,
$\boldsymbol{\gamma}_{i}=(\gamma_{i,1},\gamma_{i,2},\ldots,\gamma_{i,m_{i}})\in \mathbb{A}_{p_{i}}^{m_{i}}$, $\boldsymbol{\mu}_{j}=(\mu_{j,1},\mu_{j,2},\ldots,\mu_{j,n_{j}})\in \mathbb{A}_{q_{j}}^{n_{j}}$ , $\gamma_{i}=\sum_{k=1}^{m_{i}}\gamma_{i,k}p_{i}^{k-1}$, $\mu_{j}=\sum_{l=1}^{n_{j}}\mu_{i,l}q_{j}^{l-1}$, $0\leq \gamma_{i,k}<p_{i}$ and $0\leq \mu_{j,l}<q_{j}$.
Let us consider the set $\mathcal{D}$ as
\begin{equation}
  \mathcal{D}= \mathbb{A}_{p_{1}}^{m_{1}}\times\hdots\times\mathbb{A}_{p_{a}}^{m_{a}}\times\mathbb{A}_{p'_{1}}\times\hdots\times\mathbb{A}_{p'_{a'}}\times\mathbb{A}_{q_{1}}^{n_{1}}\times\hdots\times\mathbb{A}_{q_{b}}^{n_{b}}\times\mathbb{A}_{q'_{1}}\times\hdots\times\mathbb{A}_{q'_{b'}}.
\end{equation}
Let $f$ be the function as defined (\ref{a11111}). We define the MVF
 $M^{\mathbf{c},\mathbf{d}}:\mathcal{D}\rightarrow \mathbb{Z}$  as
 \begin{equation}
 \begin{split}
 \label{rama}
     &M^{\mathbf{c},\mathbf{d}}\left(\phi(\gamma'),\phi(\mu')\right)\\
     &=M^{\mathbf{c},\mathbf{d}}\left(\boldsymbol{\gamma}_{{1}}, \ldots, \boldsymbol{\gamma}_{{a}},\gamma_{1}',\ldots,\gamma_{a'}',\boldsymbol{\mu}_{{1}}, \ldots, \boldsymbol{\mu}_{{b}},\mu_{1}',\ldots,\mu_{b'}'\right)\\
     &=\frac{\delta}{\lambda}f\left(\boldsymbol{\gamma}_{{1}}, \ldots, \boldsymbol{\gamma}_{{a}},\boldsymbol{\mu}_{{1}}, \ldots, \boldsymbol{\mu}_{b}\right)\!+\!\!\sum_{i'=1}^{a'}\!c_{i'}\frac{\delta}{p'_{i'}}\gamma_{i'}'+\!\!\sum_{j'=1}^{b'}\!d_{j'}\frac{\delta}{q'_{j'}}\mu_{j'}',
     \end{split}
 \end{equation}
 where $0\leq c_{i'}< p'_{i'}$, $0\leq d_{j'}<q'_{j'}$, $\mathbf{c}=(c_{1},c_{2},\ldots
,c_{a'})$ and $\mathbf{d}=(d_{1},d_{2},\ldots,d_{b'})$. For simplicity, now on-wards we denote $M^{\mathbf{c},\mathbf{d}}(\boldsymbol{\gamma}_{{1}}, \ldots, \boldsymbol{\gamma}_{{a}},\gamma_{1}',\ldots,\gamma_{a'}',\boldsymbol{\mu}_{{1}}, \ldots, \boldsymbol{\mu}_{{b}},\mu_{1}',\ldots,\mu_{b'}')$ by $M^{\mathbf{c},\mathbf{d}}$. 
Consider the set $\Theta$ and $T$ as
   \begin{equation*}
      \begin{split}
       &\Theta=\{\theta:\theta=(r_{{1}}, r_{{2}}, \ldots, r_{{a}},s_{{1}}, s_{{2}}, \ldots, s_{{b}})\},\\
       &T=\{t:t=(x_{{1}}, x_{{2}}, \ldots, x_{{a}},y_{{1}}, y_{{2}}, \ldots, y_{{b}})\},
      \end{split}
  \end{equation*}
  where $0\leq r_{i},x_{i}< p_{i}^{k_{i}}$ and $0\leq s_{j},y_{j}< q_{j}^{r_{j}}$ and $k_{i},r_{j}$ are positive integers.
 Let us define MVF, $b_{t}^{\theta,\mathbf{c},\mathbf{d}}:\mathcal{D}\rightarrow \mathbb{Z}$, as
\begin{equation}
\label{hare}
    \begin{split}
b_{t}^{\theta,\mathbf{c},\mathbf{d}}=&M^{\mathbf{c},\mathbf{d}}+\sum_{i=1}^{a} \frac{\delta}{p_{i}} \gamma_{i, \pi_{i}(1)}{r_{i}}+\sum_{j=1}^{b} \frac{\delta}{q_{j}} \mu_{j, \sigma_{j}(1)}{s_{j}}+\sum_{i=1}^{a} \frac{\delta}{p_{i}} \gamma_{i, \pi_{i}(m_{i})}{x_{i}}\\
&+\sum_{j=1}^{b} \frac{\delta}{q_{j}} \mu_{j, \sigma_{j}(n_{j})}{y_{j}}+\frac{\delta}{\lambda}d_{\theta},
    \end{split}
\end{equation}
where $0\leq d_{\theta}<\lambda$.
By (\ref{5}), (\ref{rama}) and (\ref{hare}) we have
\begin{equation}
    b_{t}^{\theta,\mathbf{c},\mathbf{d}}=\frac{\delta}{\lambda}a^{\theta}_{t}+\sum_{i'=1}^{a'}c_{i'}\frac{\delta}{p'_{i'}}\gamma'_{i'}+\sum_{j'=1}^{b'}d_{j'}\frac{\delta}{q'_{j'}}\mu'_{j'}.
\end{equation}
We define the ordered set of arrays as 
\begin{equation}
   \Omega_{t}^{\mathbf{c},\mathbf{d}}=\{\psi_{\delta}(b_{t}^{\theta,\mathbf{c},\mathbf{d}}):\theta\in \Theta\}.
\end{equation}
where $\delta=l.c.m(\lambda,p'_{1},p'_{2},\ldots,p'_{a'},q'_{1},q'_{2},\ldots,q'_{b'}).$
\begin{theorem}
\label{VrindavanBihari}
 Let $m=\prod_{i=1}^{a}p^{m_{i}}_{i}, n=\prod_{j=1}^{b}q^{n_{j}}_{j},\mathbf{c}=(c_{1},\ldots,c_{a'}),\mathbf{d}=(d_{1},\ldots,d_{b'})$. Then the set  $S=\{\Omega_{t}^{\mathbf{c},\mathbf{d}}:t\in T,0\leq c_{i'}<p'_{i'},0\leq d_{j'}<q'_{j'}\}$ forms a  $2D-(\alpha_{1},z_{1}\times z_{2})-\text{ZCACS}_{\alpha}^{l_{1}\times l_{2}}$, where,  $\alpha_{1}=\left(\prod_{i'=1}^{a'}p'_{i'}\right)\left(\prod_{j'=1}^{b'}q'_{j'}\right)\alpha$, $l_{1}=m\left(\prod_{i'=1}^{a'}p'_{i'}\right)$, $l_{2}=n\left(\prod_{j'=1}^{b'}q'_{j'}\right)$, $z_{1}=m$ ,$z_{2}=n$, $\alpha=(\prod_{i=1}^{a}p^{k_{i}}_{i})(\prod_{j=1}^{b}q^{r_{j}}_{j})$, $k_{i},r_{j},m_{i},n_{j}\geq 1$.
\end{theorem}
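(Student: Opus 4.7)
The plan is to exploit a block decomposition of each array $\psi_{\delta}(b_{t}^{\theta,\mathbf{c},\mathbf{d}})$ and reduce the 2D-ACCF computation to Lemma \ref{KB} together with Lemma \ref{DauJi}. Using the identity $b_{t}^{\theta,\mathbf{c},\mathbf{d}}=\frac{\delta}{\lambda}a_{t}^{\theta}+\sum_{i'}c_{i'}\frac{\delta}{p'_{i'}}\gamma'_{i'}+\sum_{j'}d_{j'}\frac{\delta}{q'_{j'}}\mu'_{j'}$ already obtained in the construction, I would decompose the row and column indices as $\gamma'=\gamma+\gamma''m$ and $\mu'=\mu+\mu''n$ with $0\le\gamma<m$, $0\le\mu<n$, $0\le\gamma''<\prod_{i'}p'_{i'}$ and $0\le\mu''<\prod_{j'}q'_{j'}$. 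The $(\gamma',\mu')$-entry then factorises as the inner 2D-CCC entry $\omega_{\lambda}^{(a_{t}^{\theta})_{\gamma,\mu}}$ multiplied by a character $\chi_{\mathbf{c},\mathbf{d}}(\gamma'',\mu'')=\prod_{i'}\omega_{p'_{i'}}^{c_{i'}\gamma'_{i'}}\prod_{j'}\omega_{q'_{j'}}^{d_{j'}\mu'_{j'}}$, so that the $l_{1}\times l_{2}$ array is a $\prod p'_{i'}\times\prod q'_{j'}$ tiling of $\psi_{\lambda}(a_{t}^{\theta})$, each tile modulated by one character value.

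Next, for a shift $(\tau_{1},\tau_{2})$ with $|\tau_{1}|<m=z_{1}$ and $|\tau_{2}|<n=z_{2}$, I would split the defining sum for $\boldsymbol{C}(\Omega_{t}^{\mathbf{c},\mathbf{d}},\Omega_{t'}^{\mathbf{c}',\mathbf{d}'})(\tau_{1},\tau_{2})$ according to whether $\gamma+\tau_{1}$ crosses the block boundary $m$ and whether $\mu+\tau_{2}$ crosses $n$. On each of the (at most four) overlap regions the inner contribution is an aperiodic correlation of $\psi_{\lambda}(a_{t}^{\theta})$ and $\psi_{\lambda}(a_{t'}^{\theta})$ at a residual shift drawn from $\{(\tau_{1},\tau_{2}),(\tau_{1}\mp m,\tau_{2}),(\tau_{1},\tau_{2}\mp n),(\tau_{1}\mp m,\tau_{2}\mp n)\}$, carrying an outer factor $\chi_{\mathbf{c}-\mathbf{c}',\mathbf{d}-\mathbf{d}'}(\gamma'',\mu'')$ that depends only on the block position. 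Swapping the sum over $\theta\in\Theta$ inward and invoking Lemma \ref{KB}, each such inner 2D-ACCF vanishes unless $t=t'$ and the residual shift equals $(0,0)$; since $|\tau_{1}|<m$ and $|\tau_{2}|<n$ force every residual shift to be nonzero whenever $(\tau_{1},\tau_{2})\ne(0,0)$, the entire 2D-ACCF vanishes for every nonzero shift in the ZCZ, regardless of $t,t',\mathbf{c},\mathbf{c}',\mathbf{d},\mathbf{d}'$.

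At $(\tau_{1},\tau_{2})=(0,0)$ only the no-overflow region survives, with residual shift $(0,0)$, and Lemma \ref{KB} contributes $\alpha m n$ per block when $t=t'$. What remains is the character sum $\sum_{\gamma'',\mu''}\chi_{\mathbf{c}-\mathbf{c}',\mathbf{d}-\mathbf{d}'}(\gamma'',\mu'')$, which factors into independent one-dimensional sums and by Lemma \ref{DauJi} equals $\prod_{i'}p'_{i'}\prod_{j'}q'_{j'}$ when $(\mathbf{c},\mathbf{d})=(\mathbf{c}',\mathbf{d}')$ and $0$ otherwise. Combining, the autocorrelation at the origin equals $\alpha l_{1}l_{2}$ and all other correlations within the ZCZ vanish, matching the 2D-ZCACS definition with $s=\alpha$, $z_{1}=m$, $z_{2}=n$. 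Optimality is then immediate from $\hat{s}=\alpha_{1}=\alpha\prod_{i'}p'_{i'}\prod_{j'}q'_{j'}=\alpha\lfloor l_{1}/z_{1}\rfloor\lfloor l_{2}/z_{2}\rfloor$. The main obstacle I anticipate is the careful bookkeeping of the four block-overlap regions and verifying that the block-boundary decomposition correctly identifies the residual shift in each case; once this setup is in place, Lemmas \ref{KB} and \ref{DauJi} finish the argument mechanically.
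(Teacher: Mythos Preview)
Your proposal is correct and follows essentially the same route as the paper: both exploit the block/tiling decomposition $\gamma'=\gamma+\gamma''m$, $\mu'=\mu+\mu''n$ to factor the 2D-ACCF into inner correlations of the $\psi_{\lambda}(a_{t}^{\theta})$ arrays (handled by Lemma~\ref{KB}) times outer character sums (handled by Lemma~\ref{DauJi}). The only difference is presentation---the paper writes out the at-most-four overlap regions via an explicit case split on the signs and vanishing of $\tau_{1},\tau_{2}$ (its coefficients $D,E,D',E',E''$), whereas you package them uniformly; the mathematics is identical.
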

\begin{proof}
 Let $\hat{\gamma},\hat{\mu}$ are positive integers such that $0\leq\hat{\gamma}<l_{1}$ and  $0\leq\hat{\mu}<l_{2}$. Then $\hat{\gamma},\hat{\mu}$ can be written as
 \begin{equation*}
     \begin{split}
         &\hat{\gamma}=\gamma_{1}\!+\!\!\displaystyle\sum_{i=2}^{a}\gamma_{i}\left(\prod_{i_{1}=1}^{i-1}p_{i_{1}}^{m_{i_{1}}}\right)\!\!+\!\!\left(\gamma'_{1}+\sum_{i'=2}^{a'}\gamma'_{i'}\left(\prod_{i_{1}=1}^{i'-1}p'_{i_{1}}\right)\right)
        m,\\
         &\hat{\mu}=\mu_{1}\!+\!\!\displaystyle\sum_{j=2}^{b}\mu_{j}\left(\prod_{j_{1}=1}^{j-1}q_{j_{1}}^{n_{j_{1}}}\right)\!\!+\!\!\left(\mu'_{1}+\sum_{j'=2}^{b'}\mu'_{j'}\left(\prod_{j_{1}=1}^{j'-1}q'_{j_{1}}\right)\!\!\right)
         n,
     \end{split}
 \end{equation*}
 where $0\leq \gamma_{i}< p_{i}^{m_{i}}$, $0\leq \mu_{j}< q_{j}^{n_{j}}$, $0\leq \gamma'_{i'}<p'_{i'}$ and $0\leq \mu'_{j'}<q'_{j'}$. The proof will be split into following cases
 \begin{mycases}
     \case $(\tau_{1}=0,\tau_{2}=0)$\\

 The ACCF between $\Omega_{t}^{\mathbf{c},\mathbf{d}}$ and $\Omega_{t'}^{\mathbf{c}',\mathbf{d}'}$ at $\tau_{1}=0$ and $\tau_{2}=0$ can be expressed as 
 \begin{equation}
 \label{2.a}
     \begin{split}
    &C(\Omega_{t}^{\mathbf{c},\mathbf{d}},\Omega_{t'}^{\mathbf{c}',\mathbf{d}'})(0,0)\\
        &=\sum_{\theta\in \Theta}C(\psi_{\delta}((b_{t}^{\theta,\mathbf{c},\mathbf{d}})),\psi_{\delta}((b_{t'}^{\theta,\mathbf{c}',\mathbf{d}'})))(0,0)\\
        &=\sum_{\theta\in \Theta}\sum_{\hat{\gamma}=0}^{l_{1}-1}\sum_{\hat{\mu}=0}^{l_{2}-1}\omega_{\delta}^{(b_{t}^{\theta,\mathbf{c},\mathbf{d}})_{\hat{\gamma},\hat{\mu}}-(b_{t'}^{\theta,\mathbf{c}',\mathbf{d}'})_{\hat{\gamma},\hat{\mu}}}\\
        &=\sum_{\theta\in \Theta}\sum_{\gamma=0}^{m-1}\sum_{\mu=0}^{n-1}\sum_{\gamma'_{1}=0}^{p'_{1}-1}\ldots\sum_{\gamma'_{a'}=0}^{p'_{a'}-1}\sum_{\mu_{1}=0}^{q'_{1}-1}\ldots\sum_{\mu'_{b'}=0}^{q'_{b'}-1}\omega_{\delta}^{D},
        \end{split}
        \end{equation}
        where $D=\frac{\delta}{\lambda}\left((a_{t}^{\theta})_{\gamma,\mu}-(a_{t'}^{\theta})_{\gamma,\mu}\right)+\sum_{i'=1}^{a'}\frac{\delta}{p'_{i'}}(c_{i'}-c_{i'}')\gamma_{i'}+\sum_{j'=1}^{b'}\frac{\delta}{q'_{j'}}(d_{j'}-d_{j'}')\mu_{j'}$. After splitting (\ref{2.a}), we get
        \begin{equation}
        \label{029}
        \begin{split}
    &C(\Omega_{t}^{\mathbf{c},\mathbf{d}},\Omega_{t'}^{\mathbf{c}',\mathbf{d}'})(0,0)\\&=\left(\sum_{\theta\in \Theta}\sum_{\gamma=0}^{m-1}\sum_{\mu=0}^{n-1}\omega_{\delta}^{\frac{\delta}{\lambda}\left((a_{t}^{\theta})_{\gamma,\mu}-(a_{t'}^{\theta})_{\gamma,\mu}\right)}\right)\mathcal{E}\mathcal{F}\\
       &=\left(\sum_{\theta\in \Theta}\sum_{\gamma=0}^{m-1}\sum_{\mu=0}^{n-1}\omega_{\lambda}^{\left((a_{t}^{\theta})_{\gamma,\mu}-(a_{t'}^{\theta})_{\gamma,\mu}\right)}\right)\mathcal{E}\mathcal{F}\\
       &=C(\mathbf{A}^{t},\mathbf{A}^{t'})(0,0)\mathcal{E}\mathcal{F},
     \end{split}
 \end{equation}
  where 
  \begin{equation}
  \label{jajya}
    \begin{split}
&\mathcal{E}=\prod_{i'=1}^{a'}\left(\sum_{\gamma'_{i'}=0}^{p'_{i'}-1}\omega_{p'_{i'}}^{(c_{i'}-c_{i'}')\gamma'_{i'}}\right),\\ 
&\mathcal{F}=\prod_{j'=1}^{b'}\left(\sum_{\mu'_{j'}=0}^{q'_{j'}-1}\omega_{q'_{j'}}^{(d_{j'}-d'_{j'})\mu'_{j'}}\right).
    \end{split}  
  \end{equation}
 \subcase $(t\neq t'$)\\
By \textit{lemma} \ref{DauJi} we know, the set $\{\mathbf{A}^{t}:t\in T\}$ forms a 2D-CCC. Hence By \textit{lemma} \ref{DauJi}, we have
\begin{equation}
\label{sunnk}
    C(\mathbf{A}^{t},\mathbf{A}^{t'})(0,0)=0.
\end{equation}
Hence by (\ref{029}) and (\ref{sunnk}) we have
\begin{equation}
\label{king}
    C(\Omega_{t}^{\mathbf{c},\mathbf{d}},\Omega_{t'}^{\mathbf{c}',\mathbf{d}'})(0,0)=0.
\end{equation}
\subcase $(t= t'$)\\
 By \textit{lemma} \ref{DauJi}, we know 
\begin{equation}
\label{Hare}
    C(\mathbf{A}^{t},\mathbf{A}^{t'})(0,0)=\left(\prod_{i=1}^{a}p_{i}^{m_{i}+k_{i}}\right)\left(\prod_{j=1}^{b}q_{j}^{n_{j}+r_{j}}\right).
\end{equation}
Let $M=\left(\prod_{i=1}^{a}p_{i}^{m_{i}+k_{i}}\right)\left(\prod_{j=1}^{b}q_{j}^{n_{j}+r_{j}}\right)$ hence by  \textit{Lemma} \ref{DauJi}, (\ref{029}), (\ref{jajya}), (\ref{Hare}),  we have the following \begin{equation}
 \label{r.1}
 \begin{split}
     C(\Omega_{t}^{\mathbf{c},\mathbf{d}},\Omega_{t}^{\mathbf{c}',\mathbf{d}'})(0,0)
     =\begin{cases}
   M\left(\prod_{i'=1}^{a'}p'_{i'}\right)\left(\prod_{j'=1}^{b'}q'_{j'}\right)
    & {\mathbf{c}}={\mathbf{c}}',{\mathbf{d}}={\mathbf{d}}'\\
     0         ,&  {\mathbf{c}}\neq{\mathbf{c}}',{\mathbf{d}}={\mathbf{d}}'\\\
      0         ,&  {\mathbf{c}}={\mathbf{c}}',{\mathbf{d}}\neq{\mathbf{d}}'\\
      0         ,&  {\mathbf{c}}\neq{\mathbf{c}}',{\mathbf{d}}\neq{\mathbf{d}}'.
     \end{cases}
 \end{split}
 \end{equation}
\case$(0<\tau_1<\prod_{i=1}^{a}p_{i}^{m_{i}},0<\tau_2<\prod_{j=1}^{b}q_{j}^{n_{j}})$\\
Let $\sigma,\rho$ are positive integers such that $0\leq \sigma< m'$ and  $0\leq \rho< n'$ where $m'=\prod_{i'=1}^{a'}p'_{i'},n'=\prod_{j'=1}^{b'}q'_{j'}$. Then $\sigma$ and $\rho$ can be written as
\begin{equation}
    \begin{split}
       &\sigma=\sigma_{1}+\sigma_{2}p'_{1}+\ldots+\sigma_{a'}\left(\prod_{i'=1}^{a'-1}p'_{i'}\right),\\ 
&\rho=\rho_{1}+\rho_{2}q'_{1}+\ldots+\rho_{b'}\left(\prod_{j'=1}^{b'-1}q'_{j'}\right),
    \end{split}
\end{equation}
respectively where $0\leq \sigma_{i'}< p'_{i'}$ and $0\leq \rho_{j'}< q'_{j'}$ . We define vectors associated with $\sigma$ and $\rho$ to be 
\begin{equation}
\begin{split}
    &\phi(\sigma)=(\sigma_{1},\ldots,\sigma_{a'})\in \mathbb{A}_{p'_{1}}\times\hdots\times\mathbb{A}_{p'_{a'}},\\
    &\phi(\rho)=(\rho_{1},\ldots,\rho_{b'})\in \mathbb{A}_{q'_{1}}\times\hdots\times\mathbb{A}_{q'_{b'}},
\end{split}  
\end{equation}
respectively.
The ACCF between $\Omega_{t}^{\mathbf{c},\mathbf{d}}$ and  $\Omega_{t'}^{\mathbf{c}',\mathbf{d}'}$ for $0<\tau_1<\prod_{i=1}^{a}p_{i}^{m_{i}}$ and  $0<\tau_2<\prod_{j=1}^{b}q_{j}^{n_{j}}$, can be derived as \begin{equation}
\label{31}
    \begin{split}
        &C(\Omega_{t}^{c,d},\Omega_{t'}^{c',d'})(\tau_1,\tau_2)
        =\!C(\mathbf{A}^{t},\mathbf{A}^{t'})(\tau_1,\tau_2)DE\!+\!
        C(\mathbf{A}^{t},\mathbf{A}^{t'})(\tau_1\!-\!\prod_{i=1}^{a}p_{i}^{m_{i}},\tau_2)D'E+\\
        &C(\mathbf{A}^{t},\mathbf{A}^{t'})(\tau_1,\tau_2-\prod_{j=1}^{b}q_{j}^{n_{j}})DE'
        +C(\mathbf{A}^{t},\mathbf{A}^{t'})(\tau_1-\prod_{i=1}^{a}p_{i}^{m_{i}},\tau_2-\prod_{j=1}^{b}q_{j}^{n_{j}})D'E',
    \end{split}
\end{equation}
where \begin{eqnarray}
\label{sri}
D=\sum_{\sigma=0}^{m'-1}\left(\prod_{i'=1}^{a'}\omega_{p'_{i'}}^{(c_{i'}-c_{i'}')(\sigma_{i'})}\right),\\
      E=\sum_{\rho=0}^{n'-1}
    \left(\prod_{j'=1}^{b'}\omega_{q'_{j'}}^{(d_{j'}-d_{j'}')(\rho_{j'})}\right),~\\      D'=\displaystyle\sum_{\sigma=0}^{m'-2}\left(\prod_{i'=1}^{a'}\omega_{p'_{i'}}^{\left(c_{i'}(\sigma_{i'})-c'_{i'}\left(\sigma+1\right)_{i'}\right)}\right),\\
E'=\displaystyle\sum_{\rho=0}^{n'-2}\left(\prod_{j'=1}^{b'}\omega_{q'_{j'}}^{\left(d_{j'}(\rho_{j'})-d'_{j'}\left(\rho+1\right)_{j'}\right)}\right),
\end{eqnarray}
and
$\left(\sigma+1\right)_{i'},\left(\rho+1\right)_{j'}$ denotes the $i'$-th and $j'$-th components of $\phi\left(\sigma+1\right)$ and $\phi\left(\rho+1\right)$ respectively. 
By \textit{Lemma} \ref{DauJi}, for $0<\tau_1<\prod_{i=1}^{a}p_{i}^{m_{i}}$ and $0<\tau_2<\prod_{j=1}^{b}q_{j}^{n_{j}}$, we have 
\begin{eqnarray}
\label{3.1}
    C(\mathbf{A}^{t},\mathbf{A}^{t'})(\tau_1,\tau_2)=0,\\
    \label{3.2}
    C(\mathbf{A}^{t},\mathbf{A}^{t'})(\tau_1\!-\!\prod_{i=1}^{a}p_{i}^{m_{i}},\tau_2)=0,\\
    \label{3.3}
    C(\mathbf{A}^{t},\mathbf{A}^{t'})(\tau_1,\tau_2-\prod_{j=1}^{b}q_{j}^{n_{j}})=0,\\
    \label{3.4}
      C(\mathbf{A}^{t},\mathbf{A}^{t'})(\tau_1-\prod_{i=1}^{a}p_{i}^{m_{i}},\tau_2-\prod_{j=1}^{b}q_{j}^{n_{j}})=0.
\end{eqnarray}
By (\ref{31}), (\ref{3.1}), (\ref{3.2}), (\ref{3.3}), (\ref{3.4}) we have
\begin{equation}
\label{r.2}
    C(\Omega_{t}^\mathbf{c,d},\Omega_{t^{'}}^{\mathbf{c}^{'},\mathbf{d}^{'}})(\tau_1,\tau_2)=0.
\end{equation}
\case $(0<\tau_1<\prod_{i=1}^{a}p_{i}^{m_{i}},-\prod_{j=1}^{b}q_{j}^{n_{j}}<\tau_2<0)$\\
 The ACCF between $\Omega_{t}^{\mathbf{c},\mathbf{d}}$ and  $\Omega_{t'}^{\mathbf{c}',\mathbf{d}'}$ for $0<\tau_1<\prod_{i=1}^{a}p_{i}^{m_{i}}$ and  $-\prod_{j=1}^{b}q_{j}^{n_{j}}<\tau_2<0$, can be derived as

 \begin{equation}
 \label{muk}
    \begin{split}
        &C(\Omega_{t}^{\mathbf{c},\mathbf{d}},\Omega_{t'}^{\mathbf{c'},\mathbf{d'}})(\tau_1,\tau_2)\\
        &=\!C(\mathbf{A}^{t},\mathbf{A}^{t'})(\tau_1,\tau_2)DE\!+\!
        C(\mathbf{A}^{t},\mathbf{A}^{t'})(\tau_{1}-\prod_{i=1}^{a}p_{i}^{m_{i}},\tau_2)D'E\\
        &+
        C(\mathbf{A}^{t},\mathbf{A}^{t'})(\tau_1,\prod_{j=1}^{b}q_{j}^{n_{j}}+\tau_{2})DE''+C(\mathbf{A}^{t},\mathbf{A}^{t'})(\tau_1-\prod_{i=1}^{a}p_{i}^{m_{i}},\prod_{j=1}^{b}q_{j}^{n_{j}}+\tau_2)D'E'',
    \end{split}
\end{equation}
where
\begin{equation}
    \begin{split}
E''=\displaystyle\sum_{\rho=0}^{n'-2}\left(\prod_{j'=1}^{b'}\omega_{q'_{j'}}^{\left(d_{j'}\left(\rho+1\right)_{j'}-d_{j'}'(\rho_{j'})\right)}\right).
    \end{split}
\end{equation}
By \textit{Lemma} \ref{DauJi}, for $0<\tau_1<\prod_{i=1}^{a}p_{i}^{m_{i}}$ and $-\prod_{j=1}^{b}q_{j}^{n_{j}}<\tau_2<0$, we have 
\begin{eqnarray}
\label{1012}
C(\mathbf{A}^{t},\mathbf{A}^{t'})(\tau_1,\prod_{j=1}^{b}q_{j}^{n_{j}}+\tau_{2})=0,\\
\label{10121}
C(\mathbf{A}^{t},\mathbf{A}^{t'})(\tau_1-\prod_{i=1}^{a}p_{i}^{m_{i}},\prod_{j=1}^{b}q_{j}^{n_{j}}+\tau_2)=0.
\end{eqnarray}
By (\ref{muk}) , (\ref{1012}) and (\ref{10121}) we have
\begin{equation}
\label{r.3}
    C(\Omega_{t}^{\mathbf{c},\mathbf{d}},\Omega_{t'}^{\mathbf{c'},\mathbf{d'}})(\tau_1,\tau_2)=0.
\end{equation}
\case ($0<\tau_1<\prod_{i=1}^{a}p_{i}^{m_{i}},\tau_2=0$) \\

 The ACCF between $\Omega_{t}^{\mathbf{c},\mathbf{d}}$ and  $\Omega_{t'}^{\mathbf{c}',\mathbf{d}'}$ for $0<\tau_1<\prod_{i=1}^{a}p_{i}^{m_{i}}$ and $\tau_2=0$ , can be derived as
\begin{equation}
\label{308123}
    \begin{split}
        C(\Omega_{t}^{\mathbf{c},\mathbf{d}},\Omega_{t'}^{\mathbf{c'},\mathbf{d'}})(\tau_1,0)=\!C(\mathbf{A}^{t},\mathbf{A}^{t'})(\tau_1,0)DE\!+
        C(\mathbf{A}^{t},\mathbf{A}^{t'})(\tau_1-\prod_{i=1}^{a}p_{i}^{m_{i}},0)D'E.
    \end{split}
\end{equation}
By \textit{Lemma} \ref{DauJi}, for $0<\tau_1<\prod_{i=1}^{a}p_{i}^{m_{i}}$, we have 
 \begin{equation}
 \begin{split}
 \label{308000}
    &C(\mathbf{A}^{t},\mathbf{A}^{t'})(\tau_1,0)=0.\\
    & C(\mathbf{A}^{t},\mathbf{A}^{t'})(\tau_1-\prod_{i=1}^{a}p_{i}^{m_{i}},0)=0,
    \end{split}
\end{equation}
by (\ref{308123}) and (\ref{308000}) we have
\begin{equation}
    C(\Omega_{t}^{\mathbf{c},\mathbf{d}},\Omega_{t'}^{\mathbf{c'},\mathbf{d'}})(\tau_1,0)=0.
\end{equation} 
\case \\($\tau_1=0,0<\tau_2<\prod_{j=1}^{b}q_{j}^{n_{j}}$)\\

 The ACCF between $\Omega_{t}^{\mathbf{c},\mathbf{d}}$ and  $\Omega_{t'}^{\mathbf{c}',\mathbf{d}'}$ for $\tau_1=0$ and $0<\tau_2<\prod_{j=1}^{b}q_{j}^{n_{j}}$, can be derived as
\begin{equation}
\label{3081}
    \begin{split}
        C(\Omega_{t}^{\mathbf{c},\mathbf{d}},\Omega_{t'}^{\mathbf{c'},\mathbf{d'}})(0,\tau_2)=\!C(\mathbf{A}^{t},\mathbf{A}^{t'})(0,\tau_2)DE\!+
        C(\mathbf{A}^{t},\mathbf{A}^{t'})(0,\tau_2-\prod_{j=1}^{b}q_{j}^{n_{j}})DE'.
    \end{split}
\end{equation}
By \textit{Lemma} \ref{DauJi}, for $0<\tau_2<\prod_{j=1}^{b}q_{j}^{n_{j}}$, we have 
 \begin{equation}
 \begin{split}
 \label{3080}
    &C(\mathbf{A}^{t},\mathbf{A}^{t'})(0,\tau_2)=0,\\
    & C(\mathbf{A}^{t},\mathbf{A}^{t'})(0,\tau_2-\prod_{j=1}^{b}q_{j}^{n_{j}})=0.
    \end{split}
\end{equation}
By (\ref{3081}) and (\ref{3080}) we have
\begin{equation}
    C(\Omega_{t}^{\mathbf{c},\mathbf{d}},\Omega_{t'}^{\mathbf{c'},\mathbf{d'}})(0,\tau_2)=0.
\end{equation} 
\case($\tau_1=0,-\prod_{j=1}^{b}q_{j}^{n_{j}}<\tau_2<0$)\\
Similarly the ACCF between $\Omega_{t}^{\mathbf{c},\mathbf{d}}$ and  $\Omega_{t'}^{\mathbf{c}',\mathbf{d}'}$ for $\tau_1=0$ and $-\prod_{j=1}^{b}q_{j}^{n_{j}}<\tau_2<0$ is
\end{mycases}
\begin{equation}
\label{0912}
    \begin{split}
        C(\Omega_{t}^{\mathbf{c},\mathbf{d}},\Omega_{t'}^{\mathbf{c'},\mathbf{d'}})(0,\tau_2)
        =\!C(\mathbf{A}^{t},\mathbf{A}^{t'})(0,\tau_2)DE\!+
        C(\mathbf{A}^{t},\mathbf{A}^{t'})(0,\tau_2+\prod_{j=1}^{b}q_{j}^{n_{j}})DE''.
    \end{split}
\end{equation}
By \textit{Lemma} \ref{DauJi}, for $-\prod_{j=1}^{b}q_{j}^{n_{j}}<\tau_2<0$, we have 
 \begin{equation}
 \begin{split}
 \label{30818}
    & C(\mathbf{A}^{t},\mathbf{A}^{t'})(0,\tau_2+\prod_{j=1}^{b}q_{j}^{n_{j}})=0.
    \end{split}
\end{equation}
Hence by (\ref{3080}), (\ref{0912}) and (\ref{30818})  we have
\begin{equation}
\label{r.5}
    C(\Omega_{t}^{\mathbf{c},\mathbf{d}},\Omega_{t'}^{\mathbf{c'},\mathbf{d'}})(0,\tau_2)=0.
\end{equation} 
Combining all the cases  we have
\begin{equation}
\label{r.9}
\begin{split}
C(\Omega_{t}^{\mathbf{c},\mathbf{d}},\Omega_{t'}^{\mathbf{c'},\mathbf{d'}})(\tau_{1},\tau_2)
  =
\begin{cases}
M\left(\prod_{i'=1}^{a'}p'_{i'}\right)\left(\prod_{j'=1}^{b'}q'_{j'}\right),
    &\makecell{({\mathbf{c}},{\mathbf{d}},t)=({\mathbf{c'}},{\mathbf{d'}},t')\\(\tau_{1},\tau_{2})=(0,0),}\\
    \\
    \\
    0,
&\makecell{({\mathbf{c}},{\mathbf{d}},t)\neq({\mathbf{c'}},{\mathbf{d'}},t')\\(\tau_{1},\tau_{2})=(0,0),}\\
    \\
0,
    &\makecell{0\leq \tau_{1}<\prod_{i=1}^{a}p^{m_{i}}_{i},\\
    (\tau_{1},\tau_{2})\neq(0,0).}
\end{cases}
\end{split}
\end{equation}
Similarly it can be shown
\begin{equation}
\label{r.10}
    C(\Omega_{t}^{\mathbf{c},\mathbf{d}},\Omega_{t'}^{\mathbf{c'},\mathbf{d'}})(\tau_{1},\tau_2)=0,~-\prod_{i=1}^{a}p^{m_{i}}_{i}<\tau_{1}<0.
\end{equation}
Hence from (\ref{r.9}), (\ref{r.10})
we derive our conclusion.
\end{proof}
\begin{example}
Suppose that $a=1$, $b=1$, $a'=1$, $b'=1$, $p_1=2$, $m_1=2$, $k_{1}=1$, $q_1=3$, $n_1=2$, $r_{1}=1$, $p_{1}'=3$, $q_{1}'=2$. Let $\delta=6,~\lambda=6$, $\boldsymbol{\gamma}_{1}=(\gamma_{11},\gamma_{12})\in\mathbb{A}^{2}_{2}=\{0,1\}^{2}$ be the vector associated with ${\gamma}_{1}$ where $0\leq \gamma_{1}\leq 3$, i.e., $\gamma_{1}=\gamma_{11}+2\gamma_{12}$  and $\boldsymbol{\mu}_{1}=(\mu_{11},\mu_{12})\in\mathbb{A}_{3}^{2}=\{0,1,2\}^{2}$ be the vector associated with $\mu_{1}$ where $0\leq\mu_{1}\leq 8$, i.e., $\mu_{1}=\mu_{11}+3\mu_{12}$ and  $0\leq \gamma'_{1}\leq 2$, $0\leq \mu'_{1}\leq 1$.
We define the MVF $f:\mathbb{A}_{2}^{2}\times \mathbb{A}_{3}^{2}\rightarrow \mathbb{Z}$ as
\begin{equation*}
    \begin{split}
        f\left(\boldsymbol{\gamma}_{1},\boldsymbol{\mu}_{1}\right)\!\!=&3\gamma_{1,2}\gamma_{1,1}\!\!+\!\gamma_{1,1}\!+\!2\gamma_{1,2}\!+\!2\mu_{1,2}\mu_{1,1}\!+\!2\mu_{1,1}\!+\!\mu_{1,2}.
    \end{split}
\end{equation*}
Consider the MVF, $M^{\mathbf{c},\mathbf{d}}:\mathbb{A}_{2}^{2}\times\mathbb{A}_{3}\times \mathbb{A}_{3}^{2}\times\mathbb{A}_{2} \rightarrow \mathbb{Z}$ as
\begin{equation}
\begin{split}
    &M^{\mathbf{c},\mathbf{d}}\left(\boldsymbol{\gamma}_{1},\gamma'_{1},\boldsymbol{\mu}_{1},\mu'_{1}\right)\\
    &=f(\boldsymbol{\gamma}_{1},\boldsymbol{\mu}_{1})+2c_{1}\gamma'_{1}+3d_{1}\mu'_{1}\\
    &=3\gamma_{1,2}\gamma_{1,1}+\gamma_{1,1}+2\gamma_{1,2}+2\mu_{1,2}\mu_{1,1}+2\mu_{1,1}+\mu_{1,2}+2c_{1}\gamma'_{1}+3d_{1}\mu'_{1},
\end{split}
\end{equation}
 where  $0\leq c_{1} <p'_{1}=2$, $0 \leq d_{1} < q'_{1}=3$, $\mathbf{c}=c_{1}\in\{0,1\}, \text{and}~ \mathbf{d}=d_{1}\in\{0,1,2\}$. We have 
 \begin{equation}
      \begin{split}
       &\Theta=\{\theta:\theta=(r_{{1}}, s_{{1}}):0\leq r_1\leq 1,0\leq s_1\leq 2\},\\
       &T=\{t:t=(x_{{1}}, y_{{1}}):0\leq x_1\leq 1,0\leq y_1\leq 2\}.
      \end{split}
  \end{equation}
Let $d_{\theta}=0$, now from (\ref{hare}) we have
 \begin{equation}
     b_{t}^{\theta,\mathbf{c},\mathbf{d}}=M^{\mathbf{c},\mathbf{d}}+3\gamma_{1,2}r_{1}+2\mu_{1,2}s_{1}+3\gamma_{1,1}x_{1}+2\mu_{1,2}y_{1},
 \end{equation}
 and
\begin{equation}
\begin{split}
 \Omega_t^{\mathbf{c},\mathbf{d}}\!\!=\!\!\Big\{&
  \psi_{6}(b_{t}^{\theta,\mathbf{c},\mathbf{d}}):\theta=(r_{1},s_{1})\in\{0,1\}\times\{0,1,2\}\Big\}.
 \end{split}
\end{equation}
Therefore, the set
\begin{equation}
    S=\{ \Omega_t^{\mathbf{c},\mathbf{d}}:t\in T,0\leq c_{1}\leq 1,0\leq d_{1}\leq 2 \},
\end{equation}
 forms an optimal $2D-(36,4\times 9)-\text{ZCACS}_{6}^{12\times 18}$ over $\mathbb{Z}_{6}$. 
\end{example}
\begin{table}[h]
\begin{center}
\begin{minipage}{\textwidth}
\caption{Comparison with Previous Works}\label{tab2}
\begin{tabular*}{\textwidth}{@{\extracolsep{\fill}}lcccccc@{\extracolsep{\fill}}}
\toprule%
 \\%
Source & No. of set &  Array Size &  Condition & Based on    \\
\midrule
\cite{zeng2005construction} & $K=K'r$   & $L'_{1}\!\!\times\!\!(L'_{2}+r+1)$ & $r\geq 0$ & $\makecell{2D-ZCACS~of \\set~size~K'~
and \\array~size~L'_{1}\!\!\times\!\!L'_{2}}$ \\
\\
\cite{pai2021two}  & 1 & $2^{m}\times 2^{n}L$  & $m,n\geq 0$ & ZCP \!of length \!$L$ \\
\cite{das2020two}  & $K$ & $K\times K$& \makecell{$K$ divides set size} & BH matrices \\
\cite{roy2021construction} & $2\prod_{i=1}^{k_{i}}p^{2}_{i}$  & $2^{m} \times \prod_{i=1}^{k_{i}}p_{i}^{m_{i}}$& $k_{i},m_{i}\geq 1$, $p_{i}$'s are prime & MVF \\
\\
\textit{Thm 2}  & $rs\alpha$  & $rm\times sn$ & $\makecell { \alpha=(\prod_{i=1}^{a}p^{k_{i}}_{i})(\prod_{j=1}^{b}q^{r_{j}}_{j}),\\m\!\!=\!\!\prod_{i=1}^{a}\!\!p^{m_{i}}_{i},n\!\!=\!\!\prod_{j=1}^{b}\!\!q^{n_{j}}_{j},\\ r,s,\alpha\geq 1, p_{i},q_{j} are primes}$& MVF\\
\botrule
\end{tabular*}
\end{minipage}
\end{center}
\end{table}
\begin{remark}
In \textit{Theorem \ref{VrindavanBihari}},  if we take $a=1$, $p_{1}=1$, $a'=1$,  $p_{1}'=1$, $b=1$, $q_{1}=2$, $b'=l$, $r_{1}\geq 2$, we have optimal 1D-ZCCS with parameter  $(\prod_{i=1}^{l}q'_{i}2^{r_{1}},2^{n_{1}})-\text{ZCCS}_{2^{r_{1}}}^{\prod_{i=1}^{l}q'_{i}2^{n_{1}}}$, which is exactly the same result as in \cite{ghosh2022direct}. Also if we take $l=1$, then we have optimal $1$D-ZCCS of the form $(q'_{1}2^{r_{1}},2^{n_{1}})-\text{ZCCS}_{2^{r_{1}}}^{q'_{1}2^{n_{1}}}$, which is exactly the same result in \cite{sarkar2021pseudo}. Therefore the optimal $1$D-ZCCS given by \cite{ghosh2022direct,sarkar2021pseudo} appears as a special case of the proposed construction
\end{remark}
\begin{remark}
In \textit{Theorem \ref{VrindavanBihari}},  if  $a=1$, $p_{1}=1$, $a'=1$,  $p_{1}'=1$, $b=1$, $q_{1}=2$, $b'=l$, $r_{1}=1$, we have 1d-ZCCS with parameter  $(2\prod_{i=1}^{l}q'_{i},2^{n_{1}})-\text{ZCCS}_{2}^{\prod_{i=1}^{l}q'_{i}2^{n_{1}}}$, which is just a collection of $2\prod_{i=1}^{l}q'_{i}$ ZCPs with sequence length $\prod_{i=1}^{l}q'_{i}2^{n_{1}}$ and ZCZ width $2^{n_{1}}$. Hence our work produces collections of ZCPs\cite{kumar2022direct} as well.
\end{remark}

\begin{remark}
In \textit{Theorem \ref{VrindavanBihari}}, if we take
$a=1$, $p_{1}=1$, $a'=1$, $p_{1}'=1$, $b=1$, $q_{1}=2$, $b'=r$, $q'_{1}=q'_{2}=\ldots=q'_{r}=2$, $n_{1}=m-r$~\text{and}~$r_{1}=s+1$ then we have 1D-ZCCS with parameter $(2^{s+r+1},2^{m-r})-\text{ZCCS}_{2^{s+1}}^{2^{m}}$, which is exactly the same result in \cite{sarkar2018optimal}. Hence, the ZCCS in \cite{sarkar2018optimal} appears as a special case of our proposed construction.
\end{remark}
\begin{remark}
The 2D-ZCACS given by the proposed construction satisfies the equality given in (\ref{318}). Therefore the 2D-ZCACS obtained by the proposed construction is optimal. 
\end{remark}
\begin{remark}
If we take $a=1$, $a'=1$, $p_{1}=1$ and $p'_{1}=1$, in \textit{Theorem \ref{VrindavanBihari}}, we have optimal 1D-ZCCS with parameter $\left(\left(\prod_{j'=1}^{b'}q'_{j'}\right)\prod_{j=1}^{b}q^{r_{j}}_{j},n\right)-\text{ZCCS}_{\prod_{j=1}^{b}q^{r_{j}}_{j}}^{n\left(\prod_{j'=1}^{b'}q'_{j'}\right)}$ where, $n=\prod_{j=1}^{b}q_{j}^{n_{j}}$. Hence, we have optimal 1D-ZCCS of length $nm$ where, $n,m>1$ and $m=\prod_{j'=1}^{b'}q'_{j'}$. Therefore our construction produces optimal 1D-ZCCS
with a new length which is not present in the literature by direct method.
\end{remark}
\begin{remark}
    The set size of our proposed 2D-ZCACS is $\left(\prod_{i'=1}^{a'}p'_{i'}\right)\left(\prod_{j'=1}^{b'}q'_{j'}\right)\prod_{i=1}^{a}p^{k_{i}}_{i}\prod_{j=1}^{b}q^{r_{j}}_{j}$ where, $k_{i},t_{j}\geq 1$. If we take $a=1,p_{1}=1,a'=1,p'_{1}=1,r_{1}=r_{2}=\ldots=r_{b}=2,b'=1,~\text{and}~q'_{1}=2$ then we have set size $2\prod_{j=1}^{b}q^{2}_{j}$ which is the set size of the 2D-ZCACS in \cite{roy2021construction}. Therefore, we have flexible number of set sizes compared to \cite{roy2021construction}. 
\end{remark}
\subsection{Comparison with Previous Works}
Table I compares the proposed work with indirect constructions from \cite{zeng2005construction,das2020two,pai2021two} and direct construction from \cite{roy2021construction}. The constructions in \cite{zeng2005construction,das2020two,pai2021two} heavily rely on initial sequences, increasing hardware storage. The construction in \cite{roy2021construction} is direct, but set size and array sizes are limited to some even numbers. Our construction doesn't require initial matrices or sequences and produces flexible parameters.
\section{Conclusion}
In this paper, 2D-ZCACSs are designed by using MVF. The proposed design does not depend on initial sequences or matrices, so it is direct. Our proposed design produces flexible array size and set size compared to existing works. Also, our proposed construction can be reduced to 1D-ZCCS. As a result, many 1D-ZCCSs become special cases of our work. Finally, we compare our work to the existing state-of-the-art and show that it's more versatile.
\bibliography{sn-bibliography}


\end{document}